\newtheorem{problem}{Problem}
\newtheorem{theorem}{Theorem}
\newtheorem{question}{Question}
\newtheorem{lemma}{Lemma}
\begin{document}
\title{
Conditions for Advantageous Quantum Bitcoin Mining
}
\author{Robert R. Nerem}
\affiliation{Institute for Quantum Science and Technology,\\ University of Calgary, Alberta T2N 1N4, Canada  }
\email{riley.nerem@gmail.com}
\author{Daya R. Gaur}
\affiliation{Department of Mathematics and Computer Science, University of Lethbridge, Alberta T1K 3M4, Canada    }

\begin{abstract} 

Our aim is to determine conditions for quantum computing technology to give rise to security risks associated with quantum Bitcoin mining. Specifically, we determine the speed and energy efficiency a quantum computer needs to offer an advantage over classical mining. We analyze the setting in which the Bitcoin network is entirely classical except for a single quantum miner who has small hash rate compared to that of the network.
We develop a closed-form approximation for the probability that the quantum miner successfully mines a block, with this probability dependent on the number of Grover iterations the quantum miner applies before making a measurement. 
Next, we show that, for a quantum miner that is ``peaceful'',  this success probability is maximized if the quantum miner applies Grover iterations for 16 minutes before measuring, which is surprising as the network mines blocks every 10 minutes on average. Using this optimal  mining procedure, we show that the quantum miner outperforms a classical computer in efficiency (cost per block) if the condition $Q < Crb$ is satisfied, where $Q$ is the cost of a Grover iteration, $C$ is the cost of a classical hash, $r$ is the quantum miner's speed in Grover iterations per second, and $b$ is a factor that attains its maximum if the quantum miner uses our optimal mining procedure. This condition lays the foundation for determining when quantum mining, and the known security risks associated with it, will arise.

 \end{abstract}
\maketitle
\section{Introduction} \label{sec:introduction}
Bitcoin is a distributed digital currency deriving its security from cryptographic protocols \cite{Bitcoin}. The use of Bitcoin has increased dramatically in recent years, and blockchain, the backbone of the currency's security, has found application in a variety of areas \cite{BlockchainApp}.
The authenticity of Bitcoin's blockchain relies on a proof-of-work protocol in which ``miners" race to solve challenging computational problems. Although quantum computers promise faster solutions to these mining problems in theory, the practical benefits they provide and their impact on security remain uncertain.

We determine conditions for a quantum computer to outperform a classical computer at Bitcoin mining. This approach is in contrast with earlier work that instead ascertains if a quantum miner would be able to dominate the Bitcoin network \cite{Aggarwal}. We are motivated by the work of \cite{Sattath} which shows quantum mining poses a security risk even if a single quantum miner can not dominate the network.

\subsection{Quantum Bitcoin Mining}
 Bitcoin mining is the process through which blocks are added to the Bitcoin blockchain, a public ledger which records transactions.  In Bitcoin mining, miners attempt to partially invert a cryptographic hash function. That is, for a given hash function $f$, they search for a string $x$ such that $f(x)$ is less than some threshold $\tau$.  A solution to this problem yields a proof-of-work which allows the miner to add a block to the blockchain and receive a subsequent Bitcoin reward. Partial inversion of a hash function is equivalent to searching for a marked item in an unordered list of items (unstructured search) if $f$ is computationally inefficient to invert. Here, $x$ such that $f(x) < \tau$ are the marked items. Bitcoin mining is therefore an obvious potential application of Grover's quantum algorithm for unstructured search \cite{Grover1996}. Whereas classical algorithms for unstructured search find a marked item in $\Theta(D)$ queries to $f$, where $D := N/M$ is the ratio of total items to marked items, Grover's search can find a marked item in $\Theta(\sqrt D)$ queries \cite{Zalka1999, Boyer1998}. We refer to queries to a hash function $f$ by hashes. 
 
 Quantum and classical search algorithms differ in another important way. The optimal classical algorithm for unstructured search is the simple brute-force method of guessing and checking through the search space in some random order.   Contrarily, Grover's algorithm consists of applying some number of quantum operations called Grover iterations followed by a measurement, where each Grover iteration contains two hashes. Unlike the classical brute-force algorithm, which can yield a marked item after any hash, Grover's algorithm can only yield a marked item after the measurement step. Additionally, the more Grover iterations applied before measurement, the higher the probability that measurement yields a solution (assuming fewer than the optimal number of Grover iterations are applied).
 
This difference between classical and quantum search algorithms has key consequences to Bitcoin's proof-of-work. What makes Bitcoin sensitive to this difference is that miners race to be the first to find a marked item. It is important to win the race, as only a single winner reaps the Bitcoin reward for each block added. Once a block is added, a new search problem begins rendering progress on the previous problem irrelevant. For a classical miner the race setting does not change the optimal procedure from brute-force search. However, the difficulty in evaluating the effectiveness of quantum mining is that, for a quantum miner, determining the optimal procedure in the race setting requires answering the following 
question first asked in \cite{Sattath}.
 \begin{question}\label{q: main question}
 How many Grover iterations should a quantum Bitcoin miner apply before measuring? 
 \end{question}
 On one hand, the more Grover iterations the quantum miner applies before measuring, the more likely another miner is able to solve the problem before the measurement step is reached. 
On the other hand, the less Grover iterations the quantum miner applies before measuring, the less advantage they have over classical algorithms.  
What makes answering Question \ref{q: main question} challenging is that the optimal choice of when to measure depends on when other miners are expected to find a marked item. 

\subsection{Our Contribution}
The key simplification we make to answer Question \ref{q: main question} is to consider the case of a single quantum miner racing to find a proof-of-work against a set of classical miners. The probability any classical miner finds a proof-of-work at time $t$ is known to approximate the exponential distribution $\lambda e^{-\lambda t}$, where $\lambda$ is the rate at which classical miners find blocks. Considering the case of a single quantum miner allows us to leverage knowledge of this distribution to understand the optimal number of Grover iterations.

We describe a procedure for quantum mining (Algorithm \ref{alg:mine}), formulate a corresponding Markov chain (Figure \ref{fig:markov}), and use this formulation to derive an expression for the expected fraction of blocks the quantum miner contributes to the blockchain (Theorem \ref{thm:success}). This analysis accounts for the quantum miner using the ``aggressive mining" techniques of \cite{Sattath}. In aggressive mining the quantum miner performs a measurement in response to a classical miner finding a proof-of-work, potentially forcing a tie in the race and yielding a reward to the quantum miner (with non-zero but low probability). 

The expression for the expected fraction of blocks the quantum miner adds to the blockchain varies with the number Grover iterations the quantum miner applies before measuring. Still, the number of Grover iterations that maximizes this expression is unclear. As a solution, we make further simplifications justified by practical considerations. 

First, we consider the regime in which the number of Grover iterations the quantum miner applies before measuring (including Grover iterations applied in parallel) is small compared to the optimal number $\frac{\pi}{4}\sqrt{D}$. We expect this to occur when the speed of the quantum miner's computer is such that the number of Grover iterations they can apply in 10 minutes, the average time for a block to be mined, is small compared to optimal. We refer to this regime as the small computational power regime as, if the quantum miner applies far less than the optimal number of Grover iterations, then they successfully mine each block with low probability and therefore make up only a small fraction of the total Bitcoin network's hash rate. It is suggested in \cite{Aggarwal} that a quantum miner having small hash rate compared to network is the most likely scenario even under optimistic assumptions of improvements in quantum technologies.   In the small computational power regime, the probability the quantum miner's measurement yields a marked item scales approximately quadratically with the number of Grover iterations applied beforehand. 

Using this approximation we derive a closed-form expression for the expected fraction of blocks the quantum miner successfully mines (Eq. \ref{eq:probility of success}).
We also show how this expression can be used to determine the effective hash rate (Eq.\ \ref{eq:eff hash}) and efficiency, or cost per block, (Eq.\ \ref{eq:Q efficiency}) of the quantum miner. By comparing the efficiency of the quantum miner with that of a classical miner, we derive conditions for quantum mining to be advantageous over classical miners (Eq.\ \ref{eq:1conditions}).

Our second simplification is to assume that only peaceful (i.e.\ non-aggressive) mining occurs, meaning, if a classical miner finds a block first, the quantum miner looses the race and does not mine a block. This assumption is accurate if the time to preform a measurement prohibits the quantum miner from quickly responding to a classical miner adding a block, if the quantum miner is not highly connected in the Bitcoin network,  or if network protocols are adjusted to prevent aggressive mining (e.g.\  the countermeasures suggested in \cite{Sattath}). We show, for a peaceful quantum miner with small hash rate compared to the network, that the expected fraction of blocks the quantum miner adds to the blockchain is maximized if they apply Grover iterations for
\begin{equation}\label{eq:main answer}
     \left(W\left(\frac{-2}{e^2}\right) + 2\right)\lambda^{-1} \approx 16 \text{ minutes}
\end{equation}
before measuring, where $W$ is the Lambert $W$-function (Theorem \ref{thm:max}).  Eq.\ \ref{eq:main answer} definitively answers Question \ref{q: main question} in the setting we consider. Remarkably, the quantum miner should measure \textit{after} $\lambda^{-1}_0 := 10$ minutes, which is the average time for the network to mine a block. 

If our optimal quantum mining procedure is used, we are able to derive a simple, easy to compute approximation for the expected fraction of blocks the quantum miner contributes to the blockchain (Eq. \ref{eq:x/a}). This approximation yields that blocks are cheaper to mine with a quantum computer than with a classical computer if
\begin{equation}\label{eq:introcond}
    Q_\$  < C_\$rb
\end{equation}
where $Q_\$$ and  $C_\$$ are the cost of Grover iterations and classical hashes respectively, $r$ is the speed of the quantum computer in Grover iterations per second, and $ b:= \lambda^{-1}_02.59\dots$. Recall, $\lambda_0^{-1} := 10$ minutes is the Bitcoin network's inter-block time.

We refer to the factor $rb$ as the quantum advantage as it is the  ratio of the average number of hashes a classical miner requires to mine a block to the average number of Grover iterations the quantum miner requires. Our results show that quantum advantage is independent of problem difficulty $D$. The intuition behind this independence is that the optimal quantum miner applies Grover iterations for a constant amount of time (16 minutes). Thus, the quantum miner does not apply more Grover iterations if difficulty increases and  does not improve their quantum advantage accordingly. In contrast, we find quantum advantage is approximately linearly related to the speed of the quantum miner. This makes sense as increases in speed allow more Grover iterations to be performed in 16 minutes. Finally we note that quantum advantage is linearly related to the average inter-block time of the network $\lambda^{-1}_0$, and so blockchains with larger inter-block times are more amenable to quantum mining. 

Our conditions for advantegous mining lay the foundation for understanding when the first quantum miners could arise. Doing so is important because quantum mining brings with it security risks and other disruptions to the Bitcoin protocol. The primary security risk associated with quantum mining is stale blocks arising from aggressive mining, a vulnerability  first identified in \cite{Sattath}. Whereas our condition is both sufficient and necessary for advantageous peaceful mining,  they are only sufficient for advantageous aggressive mining. Among the other disruptions quantum mining brings is that the assumptions that mining is ``progress free" and that inter-block times obey an exponential distribution no longer hold in the presence of quantum miners, creating uncertainties in Bitcoin's security. 

As a demonstration of the applicability of our results, we analyze the mining performance of an example quantum computer near the current difficulty $D = 10^{20}$ \cite{difficulty}. The quantum computer we consider has gate speed given in \cite{Aggarwal} equal to that of current quantum computers, and requires zero error correction. By Eq. \ref{eq:introcond}, we find that this quantum computer would require an efficiency better than 10 $\mu$J to  outperform current classical miners in mining efficiency. Furthermore, if this quantum computer used zero parallelization of Grover search, it would have an effective hash rate of $78$ MH/s, much smaller than the current network hash rate of $1.7 \times 10^11$ MH/s.

\subsection{Comparison with Related Works}

In \cite{Lee}, Lee et al.  consider Question \ref{q: main question} in the setting in which all miners are quantum. They do this by formulating a game called a quantum race. Lee et al. characterize the Nash equilibrium of quantum races that yield no payout in a tie, and therefore correspond with peaceful mining. This equilibrium is then be used to find an approximate Nash equilibrium for the general case in which a tie could payout. 
 
Aggarwal et al. \cite{Aggarwal} give a detailed analysis of the security risks quantum poses to Bitcoin, including mining attacks and attacks on Bitcoins digital signature scheme. They estimate the resources a quantum computer would need to attack Bitcoin by dominating the computing power of the network. To do so, they calculate the rate at which a quantum miner mines blocks as one divided by the time for a quantum miner to complete Grover's algorithm. Their method is useful for determining if a  quantum miner has large computing power compared to the network and Aggarwal et al. infer that the quantum computers are not powerful enough dominate the Bitcoin network at current speeds. However, in the small computational power regime we analyze, their method is less effective. We find that a quantum miner with small computational power would have a substantially lower effective hash rate than suggested by \cite{Aggarwal}. The discrepancy arises in part because the quantum miner does not have enough time to complete the entirety of Grover's algorithm if their computational power is small compared to the Bitcoin network. Therefore, the quantum miner can not reap the full $\sqrt{D}$ benefits of the quantum speed-up.

 A short analysis of the feasibility of quantum Bitcoin mining is given in \cite{Tessler2017}  as part of a larger discussion on the effects of quantum on Bitcoin. In \cite{Cojocaru2020} Cojocaru et al. show that many of the same security assumptions that rely on the dynamics of classical mining hold in the presence of quantum adversaries. They proceed by formulating generalizations of unstructured search which arise in quantum mining, and lower bounding the queries a quantum adversary would need to solve these problems. In particular, the lower bounds they develop rely on the idea that a quantum query is worth at most $O(\sqrt{D})$ classical queries. 
 
 Contrary to our focus on mining,  a variety of works have investigated the security risks Shor's quantum algorithm for factoring poses to the Bitcoin signature scheme and possible ways to address these risks \cite{Aggarwal, Stewart2018, Fedorov2018,Semmouni2019, Ilie2020, Gao2018}. Another researched  intersection of quantum and Bitcoin is quantum schemes for blockchains in which some portion of the a blockchain protocol is made quantum \cite{Jogenfors2019,Ikeda2017, Gao2020, Rajan2019}. For a more technical description of \cite{Aggarwal,Lee, Sattath} and their relation Bitcoin security, to see Section \ref{sec:Bitcoin Security}.

\subsection{Conventions}
We define $[N] := \{0,1,\dots, N-1\}$ for $N \in \mathbb N$ and define the binary alphabet to be $\mathbb B:= \{0,1\}$. All strings in our paper are over the binary alphabet. Variables representing strings are always written using typewriter typeface. SHA-256$: \mathbb{B}^* \to \mathbb B^{256}$ is the cryptographic hash function published by the National Institute of Standards and Technology in the SHA-2 standard \cite{NIST}, and we define 
\begin{equation}
\text{SHA-256}^2:= \text{SHA-256} \circ \text{SHA-256}.
\end{equation}
We say $\mathtt s_1$ is the hash of $\mathtt s_2$ if $\mathtt s_1 = \text{SHA-256}^2(\mathtt s_2)$ as we always use $\text{SHA-256}^2$ for our hashing function.
\section{Background}
\label{sec:background}
\subsection{Bitcoin Basics}
In this subsection we give an overview of the Bitcoin protocol and blockchain.
The basic construct in a blockchain is a hash pointer, which is a tuple containing a pointer to some data and the hash of that data. A hash pointer is evidence of tamper-free data as changing data almost certainly changes the hash of that data. A blockchain, which we describe in more detail below, is a linked list of these pointers. 

Define a block header to be a string encoding a tuple of four strings $\mathtt H_i = (\mathtt N_i,\mathtt P_i,\mathtt R_i,\mathtt T_i) \in \mathbb B^*$. Note that this description of a header is a slight simplification of what is actually used in the protocol as we leave out header components which are not important to proof-of-work. The elements of this tuple are referred to respectively as the nonce, previous block hash,  transaction Merkle root, and timestamp. A block is a hash pointer $B_i = (\mathtt H_i, L_i)$ where $L_i$ is a list of transactions that are hashed into $\mathtt R_i$ via a Merkle tree, which we describe later. A blockchain is a linked list of blocks $(B_i)_{i \in [M]}$ such that the hash of $\mathtt H_i$ is $\mathtt P_{i+1}$, for all $i \leq M-1$. 

In the Bitcoin protocol, there are multiple players (nodes) in a network who each store a copy of the blockchain. A block $B_{M}$ can be added to the end of a blockchain $(B_i)_{i \in [M]}$ if $\text{SHA-256}^2(\mathtt H_{M}) < \tau$ where $\tau$ is a threshold dictated by the protocol. As SHA-256 is assumed to be a noninvertible function, finding such a header is challenging, and becomes more difficult the lower the value $\tau$ takes.

To find a valid header, a miner searches over all headers  $\mathtt H = (\mathtt N,\mathtt P,\mathtt R,\mathtt T)$ such that $\mathtt R$ is the Merkle root of the list of transactions $L$ the miner wants to add to the blockchain, and $\mathtt T$ is a valid timestamp, i.e., a timestamp which is greater than the median timestamp of the last 11 blocks and less than two hours ahead of the median time registered on nodes connected to the miner. In mining the string $\mathtt P$ is fixed to be the hash of the previous block header and the nonce $\mathtt N \in \mathbb B^{32}$ can be varied with no restrictions. In fact, the purpose of the nonce is to give the miner a register that can be freely changed while mining.  The goal is to find some $\mathtt H$ such that  SHA-256$(\mathtt H) < \tau$, so that $B = (\mathtt H, \mathtt L )$ can be added to the blockchain. The significant computational resources solving this search problem requires create the security of the Bitcoin protocol. 

Each block contains a set of transactions $ L = (\mathtt t_1, \mathtt t_2, \dots, \mathtt t_n)$. Each transaction $\mathtt t_j$ is composed of a list of inputs $(I_1,I_2,\dots, I_m)$ and a list of outputs $(O_1,O_2,\dots,O_\ell)$ where each input, and each output, is a pair $(q,v)$ containing a public key $q$ and a value in Bitcoin $v$.  For each public key $q$ in an input, the transaction also contains the corresponding digital signature. Note, this signature can only be created by the holder of the private key associated with $q$. All input keys must be an output key of a previous transaction in the blockchain. It is this previous transaction that proves the holder of the public key in the input owns the Bitcoin they wish to spend.  The list of transactions also contains an additional nonce which is encoded as part of the first transaction $I_1$ (the coinbase transaction). This coinbase nonce can be varied in addition to the 32-bit block-header nonce as part of mining.

Transactions $(\mathtt t_1, \mathtt t_2, \dots, \mathtt t_n)$ are hashed into the root $\mathtt R_i$ via a Merkle tree. The Merkle tree is a hash-pointer based structure which hashes the list of transactions $L_i$ into a a single string $\mathtt R_i$ which is the root of the tree.  The leaves of the tree are the transactions $\mathtt t_j$, and each non-leaf node $\mathtt s$ is the hash of of its two children. If any transaction changes, so will the root $\mathtt R_i$ of the Merkle tree and the block header $\mathtt H_i$.  Therefore, any change in the value of the coinbase nonce propagates to the Merkle tree root along some path and involves SHA-256$^2$ calculation at every intermediate node. As a result, changes to the extra-nonce field are computationally more expensive than changes to the header nonce. 

However, as the overhead is linear in the depth of the Merkle tree, and therefore logarithmic in the number of transactions, we ignore the extra cost associated with constructing the Merkle tree for our analysis. In particular, we only count uses of SHA-256$^2$ to find the hash of headers, and do not count uses of SHA-256$^2$ required by the Merkle tree to construct headers. In this sense, the header nonce and the coinbase nonce are treated as one single nonce, whose value is determined by unstructured search.

The outputs of the coinbase transaction $I_1$ contains as an output the public key of the miner who adds a block containing the transaction to the blockchain. Through this output, if the miner adds the block to a blockchain, they receive a fee for their work.  The addition of a block to the blockchain also yields a Bitcoin reward. This reward is realized by the rule that, for the coinbase transaction, the sum of output values is equal to the sum of input values plus the reward amount.  This process provides the incentive for miners to expend computational resources in order to add blocks to the blockchain.

The threshold $\tau$, which determines mining difficulty, is chosen by the Bitcoin protocol so that a valid block is found every 10 minutes on average. To maintain this rate of new blocks,  $\tau$ is adjusted every 2016 blocks to account for the changing hash rate of the network.  In particular, if $t_0$ is time in minutes to mine a block at the previous threshold $\tau_0$, averaged over the last 2016 blocks, then the new threshold $\tau_1$ is set to $(10/t_0)\tau_0 $.

\subsection{Bitcoin Security}

The central problem all digital currencies must overcome is double spending attacks, in which currency is spent twice by the same entity. To overcome this problem, protocols can use a central authority that verifies, before every transaction, that the currency involved has not already been spent. The Bitcoin network has no such central authority and instead relies on the blockchain. The blockchain provides a public ledger which can be used to check against double spending.

The key to Bitcoins security is that altering blocks in the blockchain is computationally infeasible. If an attacker wishes to replace a block $B_j$ with $B'_j$ in a blockchain $(B_i)_{i \in [M]}$ then for all $j<k \leq M$ they must mine a new block $B'_k$.  For $j$ sufficiently less than $M$ this attack is infeasible to do, as the size of the blockchain is always increasing. Therefore blocks earlier in the blockchain are more secure.

We now discuss stale blocks, which pose a security risk to Bitcoin. Let $(B_i)_{i \in [M]}$ be a blockchain. Suppose two miners have found valid blocks $B_{M}$ and $B'_{M}$. When both miners attempt to add their block by announcing them to the Bitcoin network, a fork occurs in which the network does not agree which block should be accepted. Miners who are aware of only $B_M$ will mine on top of $B_M$. That is, they  search for a block $B_{M+1}$ with previous block hash $\mathtt P_{M+1} = \text{SHA-256}^2(\mathtt H_M)$. Miners who are aware of only $B'_M$,  mine on top of $B'_M.$ The network nodes who are aware of both $B_M$ and $B'_M$ are instructed to mine on top of whichever block they received first. This process can lead to larger forks in which $B_M$ and $B'_M$ both have blocks added on top of them. In this case, the protocol dictates that miners should always mine on top of the block in the longest blockchain. Eventually a consensus is reached by one side of the fork becoming longer, leading the blocks in the other side of the fork to become ``stale''. 

The Bitcoin network is particularly vulnerable to the 51\% attack,  which can occur when a single entity holds more that 50\% of the total network computing power. As an example, we discuss how a 51\% attack can be used to double-spend. The attacker first adds a block $B_M$ which contains a transaction for which they are an input. Once several subsequent blocks are added, the transaction is considered final, and the attacker then receives some good in exchange for their spent Bitcoin. However, as the attacker has greater than 50\% of the computing power, with high probability they can add multiple blocks on top of $B_{M-1}$, invalidating their previous transaction by creating a stale block. As a result, the attacker has received a good but not spent Bitcoin.

The 51\% attack can be  made easier by the existence of stale blocks.  This vulnerability occurs because forks divide the computational power of the network. As a result it can become easier for an adversary to overcome the networks computational power. In particular, if $p_\mathrm{stale}$ is the rate at which stale blocks occur in the network,  a 51\% attack is achievable by an attacker who has greater than $(1 - p_\mathrm{stale})/(2-p_\mathrm{stale})$ fraction of the networks computational power \cite{Sattath}.

\subsection{Grover's Search Algorithm}
In this subsection we review Grover's quantum algorithm for unstructured search and results we use in our analysis. Grover's search solves the following problem in the oracle setting.
\begin{problem}
Given an oracle $O_f$ for a function $f: \mathbb B^n \to \mathbb B$ and number of marked items $M = |f^{-1}(1)|$, find a string $\mathtt s \in \mathbb B^n$ such that $f(\mathtt s) = 1$.
\end{problem}
\noindent Grover search solves this problem using only $\left\lceil \frac{ \pi} {2} \sqrt{\frac{2^n}{M}} \right \rceil$ calls to $O_f$ which is a speedup over the $O(2^n/M)$ calls required classically.

Grover's search algorithm utilizes the operator
\begin{equation}
    G= (H^{\otimes n}) (O_f)( c^n\text{-}Z )(H^{\otimes n})
\end{equation}
where, $H$ is a Hadamard gate and $c^n\text{-}Z$ is the $n$-qubit controlled phase operator which maps $\ket{0^n} \mapsto -\ket{0^n}$ and acts trivially on all other computational basis states. We refer to $G$ as a Grover iteration.  The operator $O_f$ is a unitary operator mapping $\ket x  \mapsto  -1^{f(x)}\ket{x}$. Let $\mathcal A$ be an $n$-qubit quantum register.
Grover's search algorithm is as follows.

\begin{algorithm}[H]
\label{alg:grover}
\caption{Grover's search}
\begin{algorithmic}[1]
\Require{ $O_f$, $M$, $n$}
\Ensure{With high probability: 
$\mathtt s$ such that $f(\mathtt s) = 1$
} 
\State apply $H^{\otimes n}$ to $\mathcal A$
\State apply $G^{\left\lceil \frac{ \pi} {4} \sqrt{\frac{2^n}{M}} \right \rceil }$ to $\mathcal A$
\State return measure$(\mathcal A)$
\end{algorithmic}
\end{algorithm}
\noindent The function named measure$(\mathcal A)$  performs a measurement of $\mathcal A$ in the computational basis and returns the result. An important variant of Grover's search has been developed for the case in which $M$ is unknown \cite{Boyer1998}.  This algorithm preserves the quadratic speedup over classical algorithms. For a nice geometric description of Grover's algorithm we direct the reader to \cite{Nielsen}.

Of special interest to Bitcoin mining is the performance of Grover's search when fewer than $\left\lceil \frac{ \pi} {4} \sqrt{2^n/M} \right\rceil$ Grover iterations are used. This scenario arises in Bitcoin mining since miners race to find a solution. Therefore, they may not have time to apply the full search algorithm. We make use of the following theorem.

\begin{theorem}[Eq.\ (26) in \cite{Gingrich}] \label{thm:prob}
For a size $2^n$ search problem, measuring after $K$ Grover iterations yields a marked item with probability 
\begin{equation}
 \sin^2(2(K + 1/2)\theta) 
\end{equation}
where $\theta = \sin^{-1}(1/\sqrt{2^n/M})$.
\end{theorem}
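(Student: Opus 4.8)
\emph{Proof proposal.} My plan is to run the textbook two-dimensional analysis of Grover's iteration and then read off the overlap with the marked subspace after $K$ steps. Write $N := 2^n$ and assume $0 < M < N$ (if $M = 0$ the probability is $0 = \sin^2 0$ with $\theta = 0$, and $M = N$ is immediate). Define the normalized uniform superpositions over the unmarked and marked strings,
\begin{equation}
\begin{aligned}
  \ket{A} &:= \tfrac{1}{\sqrt{N-M}}\,\textstyle\sum_{x:\,f(x)=0} \ket{x}, \\
  \ket{B} &:= \tfrac{1}{\sqrt{M}}\,\textstyle\sum_{x:\,f(x)=1} \ket{x} ,
\end{aligned}
\end{equation}
which are orthonormal, and set $\mathcal S := \operatorname{span}\{\ket A,\ket B\}$. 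The state prepared in line~1 of Algorithm~\ref{alg:grover}, namely $\ket\psi := H^{\otimes n}\ket{0^n}$, is the uniform superposition over $\mathbb B^n$, so $\ket\psi = \cos\theta\,\ket A + \sin\theta\,\ket B$ with $\sin\theta = \sqrt{M/N} = 1/\sqrt{N/M}$, i.e.\ $\theta = \sin^{-1}\!\big(1/\sqrt{N/M}\big)$, which is exactly the angle in the statement; in particular $\ket\psi \in \mathcal S$.

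Next I would check that one Grover iteration $G$ restricts to $\mathcal S$ as a rotation by $2\theta$. The phase oracle $O_f : \ket x \mapsto (-1)^{f(x)}\ket x$ fixes $\ket A$ and negates $\ket B$, so on $\mathcal S$ it is the reflection across the $\ket A$-axis. The diffusion reflection $2\ket\psi\bra\psi - I$, which equals $-H^{\otimes n}(c^n\text{-}Z)H^{\otimes n}$ since $c^n\text{-}Z = I - 2\ket{0^n}\bra{0^n}$ and $H^{\otimes n}\ket{0^n} = \ket\psi$, preserves $\mathcal S$ (as $\ket\psi \in \mathcal S$) and acts there as the reflection across the $\ket\psi$-axis. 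Since $G$ is, up to a global phase, the composition of these two reflections and their axes meet at angle $\theta$, the elementary identity ``two reflections across lines at angle $\theta$ compose to a rotation by $2\theta$'' gives that $G$ rotates $\mathcal S$ by $2\theta$ toward $\ket B$. Hence
\begin{equation}
  G^K\ket\psi = \cos\!\big((2K+1)\theta\big)\,\ket A + \sin\!\big((2K+1)\theta\big)\,\ket B ,
\end{equation}
and a computational-basis measurement of this state yields a marked string precisely when the outcome lies in the support of $\ket B$, which occurs with probability $\big|\langle B | G^K\psi\rangle\big|^2 = \sin^2\!\big((2K+1)\theta\big) = \sin^2\!\big(2(K+\tfrac12)\theta\big)$, as claimed.

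The step I expect to demand the most care is the bookkeeping in the previous paragraph: identifying the concrete gate word $G = (H^{\otimes n})(O_f)(c^n\text{-}Z)(H^{\otimes n})$, read in the intended operator order and combined with the leading $H^{\otimes n}$ of the algorithm, with the ``reflection $\circ$ reflection'' normal form, while tracking the order of the factors and the overall signs (a global phase, or replacing a reflection by its negative, does not change measurement statistics, but it is easy to become disoriented). A useful sanity check along the way is that setting $K = \big\lceil\tfrac{\pi}{4}\sqrt{N/M}\,\big\rceil$ drives $(2K+1)\theta$ to $\approx \pi/2$ and hence the success probability to $\approx 1$, consistent with the $O(\sqrt{N/M})$ query count recalled just before the theorem. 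Everything else is linear algebra inside the fixed two-dimensional space $\mathcal S$.
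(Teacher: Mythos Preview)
Your proposal is correct and is the standard two-dimensional rotation analysis of Grover's algorithm. Note, however, that the paper does not supply its own proof of this theorem: it is quoted as Eq.~(26) of \cite{Gingrich} and used as a black box. So there is no in-paper argument to compare against; your derivation is precisely the textbook one underlying the cited result, and the sanity check you propose at $K \approx \tfrac{\pi}{4}\sqrt{N/M}$ matches the optimal query count the paper recalls immediately before the statement.
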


The work of \cite{Gingrich} also shows that Grover search can be parallelized across $m$ quantum computers by having each quantum computer search over the entire $2^n$ solutions, but use only $O\left(\sqrt{\frac{2^n}{Mm}}\right)$ Grover iterations. This result shows that parallelizing Grover's search yields a factor of $\sqrt{m}$ improvement in circuit depth, whereas classically parallelizing search gives a factor of $m$ improvement.
 
\subsection{Quantum Attacks} \label{sec:Bitcoin Security}

In \cite{Aggarwal}, Aggarwal et al.  analyze the effect of quantum computation on the security of Bitcoin. As part of this analysis, they calculate the number of qubits and gate speed a quantum computer would need to dominate the Bitcoin network's computing power and, as a result, be capable of a 51\% attack. Using this analysis, they predict that ``it will be some time before quantum computers out compete classical machines for this task and when they do, a single quantum computer will not have majority hashing power." The analysis accounts for both parallelization of Grover's search and the overhead from quantum error correction.

In \cite{Sattath}, Sattath takes a different approach to the determining the implications of quantum computing to Bitcoin security. He shows that the properties of quantum Bitcoin mining lead to security risks even if no single miner dominates the networks computing power. This reduction in security comes from stale blocks. If a quantum miner receives a block midway through performing Grover's search algorithm, they can perform a measurement of their quantum register, and with probability higher than random guessing, mine a block. This procedure is called aggressive mining, whereas peaceful mining refers to the procedure in which the quantum miner ends their search if they learn of another block. If many quantum miners follow an aggressive procedure, then with higher probability than classically, two blocks are found at nearly the same time by different miners, creating many stale blocks. This result is important because it shows quantum mining poses security risks  simply because the process of quantum mining is different than classical mining, not because quantum mining is more effective.

Sattath conjectures that if quantum miners must commit to the number of Grover iterations they will apply before measuring, then the stale block rates are minimal. He proposes a change in the Bitcoin protocol that would force quantum miners to decide before they begin mining a block how many Grover iterations they will apply.  This change in protocol is incompatible with proposed fixes for insecurities arising from selfish mining techniques \cite{Eyal}.

Determining the rate of stale blocks with Sattath's proposed change is a challenging game theory question. This question is addressed in \cite{Lee}, where Lee et al. analyze a class of games they call quantum races in which multiple quantum capable players compete to solve a computational problem first. In many cases, they confirm Sattath's conjecture of low stale block rates. 

\section{Approach}

It is unlikely that in the near future a quantum computer could pose a security risk solely through dominating the Bitcoin network's computational power. However,   security risks arise in a network with many quantum miners even when no single miner dominates. Our goal is to evaluate the amenability of quantum computing to Bitcoin mining to determine conditions for a quantum computer to outperform a classical computer at mining. 

We consider a model in which there is a single quantum miner, and all other miners are classical. This model aligns with our aim because the threshold for useful quantum mining will be crossed if a single quantum computer becomes advantageous against the current network of all classical miners. We assume that the quantum miner's procedure is to repeat the process of applying $K$ Grover iterations and measuring, where $K$ is some fixed natural number.  

\subsection{Algorithm}
We first formulate the task of mining as a computational problem.  As we assume the function $\text{SHA-256}^2$ is noninvertible, we represent it as a black box function $f$ which is supplied as an oracle $O_f$.
\begin{problem}\label{pro:mining} Given  an oracle $O_f$ for a function $f: \mathbb B^* \to \mathbb B^{256}$, a threshold $\tau < 2^{256}$, a list of transactions $L$, a previous block hash $\mathtt P \in \mathbb B^{256}$, and a time range $(t_0,t_1)$,  find a nonce $\mathtt N \in \mathbb B^{32}$, a root $\mathtt R \in \mathbb B^{256}$ of a Merkle tree encoding the transactions $L$, and a timestamp $\mathtt T$ encoding a time in $(t_0,t_1)$, such that
\begin{equation} \label{eq:satisfies}
    f(\mathtt N , \mathtt P, \mathtt R, \mathtt T) < \tau.
    \end{equation}

\end{problem} 
Note that the time range $(t_0,t_1)$ corresponds to the timestamp rule described in Section \ref{sec:background}: after the median timestamp of the last 11 blocks and no more than 2 hours after the network time.
In this problem the difficulty $D = 2^n/M$ becomes $D = 2^{256}/\tau$. Note the choice $D = 2^{256}/\tau$ is different that the sometimes-used convention $D = 2^{256}/(\tau 2^{32})$. 
 We say a header is feasible if it satisfies all requirements on the output of Problem \ref{pro:mining} with the exception that the header may not satisfy Eq.\ \ref{eq:satisfies}. 
A header is valid if it is feasible and satisfies (\ref{eq:satisfies}). We also refer to a block with a valid header as a valid block. 

To successfully add a block to the blockchain the quantum miner must not only find 
a valid block, 
but do so before any other miner. As in our model all other miners are classical, the probability that a miner other than the quantum miner finds a block at time $t$ follows the exponential distribution $\lambda e^{-\lambda t}$. 

In our analysis $\lambda^{-1}$ describes the average rate at which blocks are found by any classical miner and we define $\lambda_0^{-1} := 10$ minutes, which is the average time for a block to be mined by \textit{any} miner.  
In practice $\lambda$ is easy to calculate since $\lambda = D/\mathcal P$ where $D$ is the difficulty and $\mathcal P$ is the combined hash rate  of all classical miners.  Therefore, our use of $\lambda$ as a parameter the quantum miner uses to determine their optimal protocol is justified. 

Let $\eta: ( \mathtt N, \mathtt P, \mathtt R, \mathtt T) \mapsto n \in [S]$ be an enumeration of all feasible headers where $S$ is the number of feasible headers. Let 
\begin{equation}
    g: [S] \to \mathbb B
\end{equation}
be the function that evaluates to $1$ if and only if $\eta^{-1}(n)$ satisfies Eq.\ \ref{eq:satisfies} for $n \in [S]$. An oracle $O_g$ for $g$ can be constructed which computes $\eta^{-1}$ and uses only one call to $O_f$.  Suppose $O_g$ requires $L$ ancillary qubits to implement and takes $\ell$ qubits as input. Suppose the quantum miner has a quantum computer with $m(L + \ell)$ qubits, or alternatively, $m$ quantum computers which each have $L+\ell$ qubits; these two scenarios are equivalent for our purposes.  We give the following protocol for mining.

Let $\{\mathcal A_i\}_{i \in [m]}$ be a set containing $m$ quantum registers, each of which contains $(\ell+L)$-qubits. In our procedure, the quantum miner runs a quantum search on each of these registers in parallel.  Let 
\begin{equation}
G = (H^{\otimes \ell}\otimes \mathds 1_L) O_g (c^\ell\text{-}Z\otimes \mathds 1_L)( H^{\otimes \ell}\otimes \mathds 1_L )
\end{equation}
be a Grover iteration
where $\mathds 1_L$ is the $L$-qubit identity operator. Fix a number $K \in \mathbb N$ of Grover iterations. 
We assume the quantum miner applies $K$ Grover iterations before measuring if no valid blocks are received from other miners. If a valid block is received, the quantum miner measures all of their registers if and only if mining aggressively. The value of $K$ that maximizes the quantum miner's success probability is the focus of Subsection \ref{subsec:prob suc}. Our procedure is described in Algorithm \ref{alg:mine}.

\begin{algorithm}[H]
\caption{Quantum Bitcoin Mining}
\label{alg:mine}
\begin{algorithmic}[1]
\Require{ $O_g$, $\eta^{-1}$, $\texttt{agg} \in \mathbb B$}
\Ensure{  Either ``no block found" or block header $\mathtt H$ such that $g(\mathtt H)  = 1$}
\While{true} \Comment{Loop indefinitely} 
\State{set all registers to the zero state}
\State{apply $H^{\otimes \ell}$ to each $\mathcal A_i$}
\For{$k \in [K]$}
\If{new block received} \label{ln:classical miner finds block}
\If{agg == 0} \Comment{Mine peacefully} 
\State{return ``no block found" }
\EndIf
\For{ $\mathcal A_i \in \{\mathcal A_i\}$} \Comment{Mine aggressively}
\State{ $a_i \gets \text{measure}(\mathcal A_i)$} \label{ln:agg measurement}\Comment{Measure $\mathcal A_i$}
\If{ $g(a_i) == 1$ } \Comment{Apply $O_g$} \label{ln:quantum miner also finds block}
\State{return block with header $\eta^{-1}(a_i)$} \label{ln:beta end point}
\EndIf
\EndFor
\State{return ``no block found" } \label{ln:quantum miner fails}
\EndIf
\State{apply $G$ to each $\mathcal A_i$ }
\EndFor
\For{ $\mathcal A_i \in \{\mathcal A_i\}$} \label{ln:classical miner does not find block}
\State{ $a_i \gets \text{measure}(\mathcal A_i)$} 
\If{ $g(a_i) == 1$ } \Comment{Apply $O_g$} \label{ln: quantum miner finds block}
\State{ return block with header $\eta^{-1}(a_i)$}  \label{ln:alpha end point}
\EndIf \label{ln: quantum miner does not find block}
\EndFor
\EndWhile
\end{algorithmic}
\end{algorithm}

This algorithm contains two procedures, one for aggressive mining ($\texttt{agg} = 1$), and the other for peaceful mining  ($\texttt{agg} = 0$). We consider both as peaceful mining is easier to analyze, and describes mining accurately in many scenarios, such as if the time to perform a measurement prohibits the quantum miner from effective aggressive mining. 
For both cases, we aim to determine the probability that the above procedure results in the quantum miner finding a marked header and that the resultant block gets accepted by the blockchain. We refer to this probability as the probability of success. 

\subsection{Markov Chain}
To determine probability of success, we represent the aggressive mining procedure using the state transition graph in Figure \ref{fig:markov}. If we take $\phi = 0$, then this state transition graph describes the peaceful mining procedure. We define $T := K/r$ where $K$ is the number of Grover iterations and  $r$ is the speed of the quantum miner in Grover iterations per second. The executing events in the protocols are as follows. 

\begin{enumerate}[leftmargin= 1.75cm]
    \item[$(1) \to (2)$] Some classical miner announces a valid block before $T$.
    \item[$(2) \to (6)$] The quantum miner discovers a valid block using aggressive mining before $T$.
    \item[$(2) \to (5)$] The quantum miner fails at aggressive mining, and the classically mined block is accepted to the blockchain. 
    \item[$(1) \to (3)$] No classical miner finds a block before $T$.
    \item[$(3) \to (4)$]  The quantum miner discovers a valid block at time $T$.
    \item[$(3) \to (1)$] The quantum miner's measurement at time $T$ fails to yield a valid block.
    \item[$(6) \to (7)$] The classically mined block gets accepted and the quantum miner's block becomes stale.
    \item[$(6) \to (8)$] The quantum miner's block gets accepted and the quantum miner's block becomes stale.
 \end{enumerate}

\begin{figure}
\begin{center}
\begin{tikzpicture}[main node/.style={circle,draw}, inner sep=2pt, minimum size=20pt, scale=1.25]

    \node[main node] (1) at (0,0) {\large 1};
    \node[main node] (2) at (1,1) {\large 2};
    \node[main node] (3) at (1,-1) {\large 3};
    \node[main node] (4) at (2,-2) {\large 4};
    \node[main node] (5) at (2,2) {\large 5}; 
    \node[main node] (6) at (2,0) {\large 6};
    \node[main node] (7) at (3,1) {\large 7};
    \node[main node] (8) at (3,-1) {\large 8};
\path[->,thick,shorten >= 3pt,shorten <= 3pt,>=angle 90]
(1) edge node[above left]  {$\mu$} (2) 
(1) edge[bend left=30] node[above right]  {$1- \mu$}  (3)
(3) edge[bend left=30] node[below left]  {$1- \nu$}(1)
(2) edge node[above left]  {$1- \phi$} (5) 
(2) edge node[above right]  {$\phi$} (6) 
(6) edge node[below right]  {$1 - \gamma$} (7)
(6) edge node[below left]  {$\gamma$}  (8) 
(3) edge node[below left]  {$\nu$}  (4) 
  ;
\end{tikzpicture}

\bigskip
KEY
 
\begin{enumerate}
    \item Initial state 
    \item Classical miner has found a block in time $T$ (ln \ref{ln:classical miner finds block})
    \item Classical miner did not find a block in time $T$ (ln \ref{ln:classical miner does not find block})
    \item Quantum miner finds a block (ln \ref{ln: quantum miner finds block})
    \item In response to classical miner finding a block, the quantum miner measured and failed to find a block
    (ln \ref{ln:quantum miner fails})
    \item In response to classical miner finding a block, the quantum miner measured and found a block (ln \ref{ln:quantum miner also finds block})
    \item Classical miner's block is accepted
    \item Quantum miner's block is accepted
\end{enumerate}
\begin{enumerate}
    \item[($\nu$)] Probability the quantum miner's measurement after $T$ seconds of Grover iterations yields a block to the quantum miner
    \item[($\mu$)] Probability the classical miner finds a block before $T$
    \item[($\phi$)] Quantum miner finds a block in response to classical miner finding a block first (aggressive mining)
    \item[($\gamma$)]  Fraction of miners who mine on the quantum miner's block when a fork occurs
\end{enumerate}
 States (4), (5), (7), (8) contain self loops of weight 1 (not shown)
\end{center}\caption{State transition graph for quantum Bitcoin mining. 
}\label{fig:markov}
\end{figure}

We model these events as independent, and their probabilities are in the KEY.  Quantum miner generate a valid block in two situations.
\begin{enumerate}
    \item[$(4)$] No classical miner finds a block in time $T$ (probability $1 - \mu$), and the quantum miner's subsequent measurement succeeds, which occurs with probability $\nu$.
    \item[$(8)$] Some classical miner generates a valid block before $T$ with probability $\mu$ and announces the block to the network. The quantum miner receives this valid block, and in response, performs a  measurement (aggressive mining) that succeeds with probability $\phi$.  Subsequent blocks are then added to on top of the quantum miner's block, as opposed to the classical miner's block  resulting in acceptance of the quantum miner's block(probability $\gamma$). 
\end{enumerate}

The quantum miner's probability of success  when aggressive is the sum $P := p_{14} + p_{18}$ where $p_{14}$ is the probability the resultant state is $(4)$ and $p_{18}$ is the probability the resultant state is $(8)$. As above, these are the only two states which represent success for the quantum miner. In the peaceful case, the probability of success $P$ is simply $p_{14},$ as $\phi = 0$ implies $p_{18} = 0$. 

Importantly, the state transition graph shows that if the quantum miner's measurement at $(3)$ does not return a valid block  then the resultant state is  $(1)$. To see this, first note that the quantum miner resets their quantum computer after measuring. Also exponential distributions, such as that which describes the inter-arrival times of classically found blocks, are 
memoryless. As a result, if the quantum miner's measurement at  at time $T$ fails then the probability a classical miner finds a block at some time $T' > T$ is 
\begin{equation}
    \lambda e^{-\lambda \Delta T}
\end{equation}
where $\Delta t = T - T'.$ Thus, the state following $(3)$, if the quantum miner did not find a block, is indistinguishable from the starting state $(1).$ This property is directly a result of the Bitcoin proof-of-work being progress-free \cite{Biryukov2016} for classical miners. 

Our procedure assumes a $K$ that is not changed between iterations.  However, the constant $K$ case always contains an optimal quantum mining procedure; as the state of the system directly after a failed quantum measurement is always identical, a value of $K$ that maximizes the quantum miner's future success following this measurement must also be the same as the initial choice for $K$. In short, as each time the quantum miner reaches $(1)$, the state of the system is indistinguishable, the quantum miner can always make the same choice of $K$.

\subsection{Assumptions and Approximations}

We now make rigorous the assumptions and approximations we use in our analysis of success probability.  Recall $D=2^n/M = 2^{256}/\tau$. We develop approximations for the case where $1 \ll K \ll \sqrt{D}$. This is the regime in which the quantum miner applies far fewer than the optimal number of Grover iterations before measuring, but still applies a number of Grover iterations much greater than one. We refer to this regime as the small computational power regime. 

We now motivate each part of our assumption. We take $1 \ll K$ since if $K$ is close to $1$ then the quantum miner has little advantage over classical mining. Furthermore, even a relatively slow quantum computer is still fast enough to apply many more than a single Grover iteration in the 10 minutes average time it takes for a block to be mined.  As an example, if $r$ is one Grover iteration per second then the quantum miner can apply $600$ Grover iterations in 10 minutes. 

Second, we take $K \ll \sqrt{G}$ since this aligns with the quantum computer having small computational power compared to the Bitcoin network, which is the predicted to be the most likely scenario by \cite{Aggarwal}. To see this, notice if $K$ is near $\sqrt{G}$ then quantum miner mines blocks with high probability. We expect $K \ll \sqrt{G}$ if $r/\lambda_0 \ll \sqrt D$, that is, if the number of Grover iterations  the quantum miner can apply in 10 minutes is small compared to optimal.

We take the stronger assumption that $mK \ll \sqrt{D}$ in the final step of our derivation of  conditions for advantegous quantum mining  (Eq. \ref{eq:x/a}). As $K$ is the number of sequential Grover iterations the quantum miner applies, and $mK$ is the total number of Grover iterations they apply (both sequential and in parallel) this is the regime in which the quantum miner applies far fewer Grover iterations than the optimal number of sequential Grover iterations before measuring. No other approximations in our paper require $mK \ll K$. 

We distinguish between $\lambda$, the rate at which classical miners mine blocks, and $\lambda_0$ the rate at which the entire network mines blocks. However, in the small computational power regime, the quantum miner contributes only a small portion to the rate at which blocks are mined by the network. As a result, $\lambda \approx \lambda_0$. We frequently turn to this approximation to make simplifications as $\lambda_0$ is fixed to $10$ minutes by the Bitcoin protocol. 

For all parts of our analysis, we only consider the time cost to implement Grover iterations. We expect this time cost to dominate the time used by other operations such as measuring quantum registers, resetting quantum registers, and inverting $\eta$. This assumption is accurate in the $1 \gg K$ limit we consider.

\section{Results}

\label{sec:results}

\subsection{Probability of Success}
\label{subsec:prob suc}

In this subsection we derive the expression for the quantum miner's probability of successfully adding a block to the blockchain, or in other words, the expected fractions of blocks they add to the blockchain. Next, we develop a closed-form approximation for this success probability valid in the small computational power regime. Finally, we determine the value of $K$ that optimizes this approximation in the case that the quantum miner is peaceful. 

 \subsubsection{Expressions for Transition Probabilities}
 
 We begin our determination of success probability by considering the expression for $\nu$. Using Theorem \ref{thm:prob}, the probability a measurement of a single register $\mathcal A_i$ at time $T$ yields a marked header is
 \begin{equation} \label{eq:q success m=1}
\sin^2\left(2(K + 1/2)\theta \right)
\end{equation}
where $\theta = \sin^{-1}\left(1/\sqrt D\right)$.  
As $m$ registers are measured at time $T$, the total probability the measurements yields a valid block is
\begin{equation} \label{eq:q success}
    \nu = 1- \left(1- \sin^2\left(2(K + 1/2)\theta \right)\right)^m.
\end{equation}

 Next,  as $\sqrt{D} \gg 1$, applying the small angle approximation yields $\theta \approx 1/\sqrt D$. Using $K \ll \sqrt{D}$ to apply the small angle approximation once more we get
 \begin{equation}
    \sin^2\left(2(K + 1/2)\theta \right) \approx  \frac{4K^2}{D}.
\end{equation}
Note, we have also used that $K \gg 1$ implies $K + 1/2 \approx K$.
The probability the measurements at $T$ yields a valid block is then approximately 
\begin{equation}
   \nu \approx 1 - \left(1 - \frac{4K^2}{D}  \right)^m.
\end{equation}

As $\frac{4K^2}{D}$ is close to zero in the small computational power regime, we approximate $\left(1 - \frac{4K^2}{D}\right)^m$ by its binomial series and take only the first two terms to get
\begin{equation}
 \nu \approx 1 - \left(1 - \frac{4K^2}{D}  \right)^m \approx \frac{4mK^2}{D}.
\end{equation}
The approximation used here is similar a common approximation used for classical mining where it is often assumed that parallelizing over $m$ processors boosts the probability of success by a factor of $m$. In fact, this is only correct if a classical miner checks a \textit{random} header each hash, and therefore may check the same header twice.
We define our approximation $\tilde \nu$ as
\begin{equation}
    \tilde \nu := \frac{4mK^2}{D}.
\end{equation}
 
 Now we discuss the expression for the probability $\mu$ that the classical miner finds a block before time $T$. The time for a block to be mined mined by a classical miner follows an exponential distribution $\lambda e^{-\lambda t}$ where $\lambda^{-1}$ is the average time for a block to be mined classically. Therefore, the probability any classical miner finds a block before time $T$ is
 \begin{align}
     \mu &= \int_0^{T} \lambda e^{-\lambda t}dt\\
     &= 1 - e^{-\lambda T}.
 \end{align}
 
 Finally, we find an expression  for the probability $\phi$ that if a classical miner finds a block first, the quantum miner also finds a block via aggressive mining. If the quantum miner is following the peaceful protocol, then $\phi = 0$, so here we address the aggressive case. Recall from the analysis of $\nu$ that the quantum miner's measurement after applying $k$ Grover iterations yields a marked header with probability 
 \begin{equation}
     q(k) = 1- \left(1- \sin^2\left(2(k + 1/2)\phi \right)\right)^m
 \end{equation}
 or, using the substitution $t := k/r$,
  \begin{equation}
     q(t) = 1- \left(1- \sin^2\left(2(rt + 1/2)\phi \right)\right)^m.
 \end{equation}

 The number of Grover iterations $k$ the quantum miner has applied when they measure at line \ref{ln:agg measurement} is dependent on the time $t$ at which the classical miner finds a block. In particular, conditioned on a classical miner finding a block at time $t$, the quantum miner finds an additional block with probability $q(t)$. The time $t$ is drawn from the probability distribution $\lambda e^{-\lambda t}$. However, as $\phi$ is the probability of the transition $(2) \to (6)$ it must be that $t < T$ as this is required for the state to be $(2)$. Therefore, the probability distribution of times must be normalized by a factor $A$ to account for this restriction:
 \begin{align}
     A\int_0^{T} \lambda e^{-\lambda t}dt = 1\\
     \implies A  = \frac{1}{1 - e^{-\lambda T}}.
 \end{align}
The probability the classical miner finds a block at time $t$, conditioned on $t < T$, is 
\begin{equation}
    \frac{1}{1 - e^{-\lambda T}}\lambda e^{-\lambda t}. 
\end{equation}
The probability the quantum miner finds an additional block is then
 \begin{equation}\label{eq:phi}
     \phi = \int_0^{T}    \frac{\lambda}{1 - e^{-\lambda T}} e^{-\lambda t} q(t)dt 
 \end{equation}
 by the law of total probability. 
 
 Using the same approximations used in the derivation of $\tilde \nu$,
 \begin{equation} \label{eq:q(t)}
     q(t) \approx \frac{4m(rt)^2}{D}.
 \end{equation}
 Substituting Eq.\ \ref{eq:q(t)} into Eq.\ \ref{eq:phi} yields our approximation $\tilde \phi$ for $\phi$:
\begin{align}
     &{\tilde \phi} := \int_0^{T}    \frac{\lambda}{1 - e^{-\lambda T}} e^{-\lambda t} \frac{4m(rt)^2}{D} dt\\
     &= \frac{4mr^2\lambda}{D(1 - e^{-\lambda T})}\int_0^{T}    e^{-\lambda t} t^2 dt
    \\
    &= \frac{4mr^2}{D(1 - e^{-\lambda T})}\frac{e^{-T\lambda}(-T\lambda(T\lambda +2) -2) + 2}{\lambda^2}.
 \end{align}

\subsubsection{Total Probability of Success}

Now that we have expressions for the transition probabilities in Figure \ref{fig:markov}, as well as approximations for these expressions in the $\sqrt D \gg K$ and $K \gg 1$ limit, we  analyze the total probability that the quantum miner successfully adds a block to the blockchain.  To start, we prove the following lemma. 

\begin{lemma}
Starting at state (1), after an infinite number of steps the probability the state is (4) is 
\begin{equation}
    p_{14} = \frac{\nu(1-\mu)}{1- (1-\mu)(1-\nu)}.
\end{equation}
\end{lemma}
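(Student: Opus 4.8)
The plan is to exploit the structure of the Markov chain in Figure \ref{fig:markov}: state (4) is absorbing, and the only way to reach it is to pass through state (3) and then take the $(3)\to(4)$ edge, while the only way to return to the start is the loop $(1)\to(3)\to(1)$. So I would condition on the number of times the process returns to state (1) before eventually escaping to an absorbing state.

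First I would observe that from state (1) the process either moves to (2) with probability $\mu$ (after which it can never reach (4), since (2) leads only to (5) or (6), both of which flow to absorbing states (5),(7),(8)), or it moves to (3) with probability $1-\mu$. From (3) it reaches (4) with probability $\nu$ and returns to (1) with probability $1-\nu$. Hence, starting from (1), the event ``eventually absorbed at (4)'' is realized exactly by trajectories of the form: loop $(1)\to(3)\to(1)$ some number $j\ge 0$ of times, then $(1)\to(3)\to(4)$. Each loop iteration has probability $(1-\mu)(1-\nu)$, and the final successful segment has probability $(1-\mu)\nu$. Summing the resulting geometric series,
\begin{equation}
p_{14} = \sum_{j=0}^{\infty} \big((1-\mu)(1-\nu)\big)^{j}(1-\mu)\nu = \frac{(1-\mu)\nu}{1-(1-\mu)(1-\nu)},
\end{equation}
which is the claimed expression. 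I would note that the series converges because $(1-\mu)(1-\nu)<1$ whenever $\mu>0$ or $\nu>0$ (and the $\mu=\nu=0$ degenerate case gives $p_{14}=0$, consistent with the formula after cancellation, or can simply be excluded).

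Alternatively — and this is the cleaner writeup I would actually use — I would set up a first-step (renewal) equation directly: letting $p_{14}$ denote the absorption probability at (4) starting from (1), conditioning on the first two transitions gives $p_{14} = (1-\mu)\nu + (1-\mu)(1-\nu)\,p_{14}$, since with probability $(1-\mu)(1-\nu)$ the chain is back at (1) in a state statistically identical to the start (this is exactly the memorylessness/progress-free remark already made in the text preceding the lemma). Solving the linear equation for $p_{14}$ yields the result immediately.

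I do not anticipate a serious obstacle here; the only thing requiring a word of care is justifying that returning to (1) genuinely resets the process — i.e., that the self-loop back to (1) carries no "memory" — but the paper has already argued this via the memorylessness of the exponential inter-arrival distribution and the fact that the quantum miner resets their register, so I would simply cite that discussion. A second minor point is confirming that states (2), (5), (6), (7), (8) contribute nothing to $p_{14}$, which is immediate from the absence of any edge from those states back toward (3) or (4).
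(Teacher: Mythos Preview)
Your proposal is correct and follows essentially the same approach as the paper: enumerate the trajectories reaching state~(4) as $j$ loops $(1)\!\to\!(3)\!\to\!(1)$ followed by $(1)\!\to\!(3)\!\to\!(4)$, and sum the resulting geometric series. Your alternative first-step recursion $p_{14}=(1-\mu)\nu+(1-\mu)(1-\nu)p_{14}$ is a clean equivalent that the paper does not present explicitly, but it is not a genuinely different route.
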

\begin{proof}
As the only sequences of states which result in a final state of (4) are (1)(3)(4), (1)(3)(1)(3)(4), (1)(3)(1)(3)(1)(3)(4) and so on, we see that the state (4) can only be reached after an even number of steps. Let $w(n)$ be the probability the state (4) is reached after $n$ steps. Then, the total probability that the final state is (4) is  given by 
\begin{equation}\label{eq:p14}
    p_{14} = w(2) + w(4) + w(6) + \dots.
\end{equation}
The value of $w(n)$ for $n$ even is 
\begin{equation} \label{eq:q(n)}
    w(n) = \nu(1- \mu)\left((1 - \mu)(1-\nu)\right)^{n/2}
\end{equation}
as the only way to arrive at (4) after $n$ steps is to follow the sequence 
\begin{equation}
   \underbrace{ (1)(3)(1)(3)\dots(1)(3)}_{\text{$n/2$ times}}(4).
\end{equation}
Substituting Eq.\ \ref{eq:q(n)} into Eq.\ \ref{eq:p14}  yields a geometric series:
\begin{widetext}
\begin{align}
     p_{14} &= \nu(1 - \mu)(1 - \mu)(1-\nu) + \nu(1 - \mu)\left((1 - \mu)(1-\nu)\right)^2 + \nu(1 - \mu)\left((1 - \mu)(1-\nu)\right)^3 + \dots \\
     &= \nu(1 - \mu)\left( (1 - \mu)(1-\nu) + \left((1 - \mu)(1-\nu)\right)^2 + \left((1 - \mu)(1-\nu)\right)^3 + \dots\right)\\
     &= \frac{\nu(1 - \mu)}{1 - (1 - \mu)(1-\nu)}.
\end{align}
\end{widetext}
\end{proof}

We now evaluate $p_{14}$ using our expressions and approximations of transition probabilities. Substituting in the expression for $\mu$ and using $\tilde \nu$ as an approximation of $\nu$ gives that the approximation 
\begin{align}
   \tilde p_{14} :&=\frac{\tilde \nu(1 - \mu)}{1 - (1 - \mu)(1-\tilde \nu)} \\
   &=\frac{\frac{4mK^2}{D}e^{-\lambda K/r}}{1-e^{-\lambda K/r}(1- \frac{4mK^2}{D})}
\end{align}
for the probability that the final state is (4). 

We now derive the total probability that the quantum miner successfully mines a block. 
\begin{theorem}\label{thm:success}
The probability $P := p_{14} + p_{18}$ that the quantum miner successfully adds a block to the blockchain is

\begin{equation}
    P= \frac{\nu(1-\mu)}{1- (1-\mu)(1-\nu)} + \gamma \phi \left(1 -  \frac{\nu(1 - \mu)}{1 - (1 - \mu)(1-\nu)}\right).
\end{equation}
\end{theorem}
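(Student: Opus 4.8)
The plan is to compute $P = p_{14} + p_{18}$ by decomposing the Markov chain in Figure \ref{fig:markov} into the two disjoint families of absorbing paths that yield a block to the quantum miner, exactly as was done for $p_{14}$ in the preceding lemma. The term $p_{14}$ is already established by that lemma, so the only new work is to evaluate $p_{18}$, the probability of absorption at state $(8)$, and then to show the stated combination equals $p_{14}+p_{18}$.

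First I would enumerate the paths reaching $(8)$. From the transition graph, the only way to reach $(8)$ is via the segment $(1)(2)(6)(8)$, possibly preceded by any number of failed rounds $(1)(3)$, i.e.\ the paths $\underbrace{(1)(3)(1)(3)\cdots(1)(3)}_{j\text{ times}}(1)(2)(6)(8)$ for $j = 0,1,2,\dots$. Each such path has probability $\bigl((1-\mu)(1-\nu)\bigr)^{j}\,\mu\,\phi\,\gamma$, so summing the geometric series in $j$ gives
\begin{equation}
   p_{18} = \frac{\mu\phi\gamma}{1-(1-\mu)(1-\nu)}.
\end{equation}
Then I would add this to $p_{14} = \dfrac{\nu(1-\mu)}{1-(1-\mu)(1-\nu)}$ and show the sum rewrites as the claimed expression. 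The key algebraic identity is that the common denominator $1-(1-\mu)(1-\nu)$ can be split: writing $\gamma\phi\bigl(1 - p_{14}\bigr) = \gamma\phi\cdot\dfrac{1-(1-\mu)(1-\nu)-\nu(1-\mu)}{1-(1-\mu)(1-\nu)} = \gamma\phi\cdot\dfrac{\mu}{1-(1-\mu)(1-\nu)}$, since $1-(1-\mu)(1-\nu)-\nu(1-\mu) = 1-(1-\mu)\bigl[(1-\nu)+\nu\bigr] = 1-(1-\mu) = \mu$. This is precisely $p_{18}$, so $P = p_{14}+p_{18}$ matches the theorem statement.

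I do not expect a genuine obstacle here; the proof is a bookkeeping exercise in absorbing Markov chains together with one short simplification. The one point requiring slight care is justifying that the memorylessness argument from the text (the post-failure state at $(3)$ is indistinguishable from $(1)$) legitimately lets us treat each ``round'' as an independent trial, so that the path probabilities genuinely multiply and the geometric-series resummation is valid; this was already argued in the discussion surrounding the Markov chain, so I would simply invoke it. A secondary check is that the two path families — those ending at $(4)$ and those ending at $(8)$ — are mutually exclusive and that no other absorbing state (namely $(5)$ or $(7)$) contributes to success, which is immediate from the KEY labelling of Figure \ref{fig:markov}.
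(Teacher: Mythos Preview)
Your proposal is correct. The route differs slightly from the paper's, and the difference is worth a sentence. You compute $p_{18}$ by direct path enumeration, summing the geometric series $\sum_{j\ge 0}\bigl((1-\mu)(1-\nu)\bigr)^{j}\mu\phi\gamma$ to obtain $p_{18}=\dfrac{\mu\phi\gamma}{1-(1-\mu)(1-\nu)}$, and then verify algebraically that this equals $\gamma\phi(1-p_{14})$. The paper instead bypasses the second geometric sum with a structural observation: state $(2)$ is visited if and only if the terminal state is not $(4)$, hence $p_{12}=1-p_{14}$, and then $p_{18}=p_{12}\phi\gamma=(1-p_{14})\phi\gamma$ immediately. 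Your approach is a clean reuse of the lemma's technique and makes the computation self-contained; the paper's shortcut is marginally quicker and explains \emph{why} the factor $1-p_{14}$ appears without any algebra. Either way the content is the same, and your algebraic check $1-(1-\mu)(1-\nu)-\nu(1-\mu)=\mu$ is exactly the identity linking the two presentations.
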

\begin{proof}
Examination of the state transition graph reveals that the state   (2) 
 is visited 
if and only if the final state  (4) is not. Therefore if $p_{12}$ is the probability that (2) is reached at any number of steps from (1) then 
\begin{equation}
    p_{12} = 1 - p_{14}.
\end{equation}
The state transition graph shows that the probability $p_{18}$ that the final state is (8) is given by
\begin{align}
   p_{18} &=  p_{12}\phi\gamma   \\
    &= \left(1 -  p_{14}\right)\phi\gamma  ,\\
    p_{18} &=  \left(1 -  \frac{\nu(1 - \mu)}{1 - (1 - \mu)(1-\nu)}\right)\phi\gamma . \label{eq:p18}
\end{align}
Substituting this expression for $p_{18}$ into $P := p_{14} + p_{18}$ yields the statement of the theorem. 
\end{proof}

We now derive a closed-form expression for the quantum miner's success probability under the approximations $\phi \approx \tilde \phi$ and $\nu \approx \tilde \nu$. 
First, our approximation $\tilde p_{18}$ for $p_{18}$ is found by
evaluating $\mu$ and substituting $\tilde \nu$ and $\tilde{\phi}$ for $\nu$ and $\phi$ in Eq. \ref{eq:p18} yielding
\begin{widetext}
\begin{align}
\tilde p_{18} :&= \left(1 -  \frac{\tilde \nu(1 - \mu)}{1 - (1 - \mu)(1-\tilde \nu)}\right)\tilde \phi\gamma  \\
 &= \left(1 -  \frac{\frac{4mK^2}{D}e^{-\lambda K/r}}{1-e^{-\lambda K/r}(1- \frac{4mK^2}{D})}\right)\frac{4mr^2\gamma}{D\lambda^2(1 - e^{-\frac{\lambda K}{r}})}\left(e^{-\frac{K\lambda}{r}}\left(-\frac{K\lambda}{r}\left(\frac{K\lambda}{r} +2\right) -2\right) + 2\right).
\end{align}
Therefore, the total probability $P := p_{14} + p_{18}$ of the quantum miner's success is approximated by

\begin{align} \label{eq:probility of success}
    \tilde P :&= \tilde p_{14} + \tilde p_{18}\\
 &=  \frac{\frac{4mK^2}{D}e^{-\lambda K/r}}{1-e^{-\lambda K/r}(1- \frac{4mK^2}{D})} +\left(1 -  \frac{\frac{4mK^2}{D}e^{-\lambda K/r}}{1-e^{-\lambda K/r}(1- \frac{4mK^2}{D})}\right)\frac{4mr^2\gamma}{D\lambda^2(1 - e^{-\frac{\lambda K}{r}})}\left(e^{-\frac{K\lambda}{r}}\left(-\frac{K\lambda}{r}\left(\frac{K\lambda}{r} +2\right) -2\right) + 2\right).
\end{align}

\end{widetext}

\subsubsection{Optimal Number of Grover Iterations}

Now, we find the value of $K$ which maximizes $\tilde p_{14}$. 
This value describes the optimal $K$ for quantum mining when $\gamma$ is small, or when the quantum miner uses only peaceful mining. 
\begin{theorem} \label{thm:max}
The probability 
\begin{equation}
   \tilde p_{14} = \frac{\frac{4mK^2}{D}e^{-\lambda K/r}}{1-e^{-\lambda K/r}(1- \frac{4mK^2}{D})}. 
\end{equation}
attains its maximum value at $K = \dfrac{y_0 r}{\lambda}$ where 
\begin{align}
    y_0 :&= W\left(\frac{-2}{e^2}\right) + 2
   \\ &\approx  1.59362426
\end{align}
where $W$ is the Lambert $W$-function, i.e., the inverse function of $f(z) = ze^z$. 
\end{theorem}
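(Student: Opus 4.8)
The plan is to reduce the maximization of $\tilde p_{14}$ to a single-variable calculus problem by substituting $y := \lambda K / r$, so that $K = yr/\lambda$ and $\frac{4mK^2}{D} = \frac{4mr^2}{D\lambda^2} y^2 =: c y^2$ where $c := \frac{4mr^2}{D\lambda^2}$ is a constant independent of $K$. Then
\begin{equation}
    \tilde p_{14} = \frac{c y^2 e^{-y}}{1 - e^{-y}(1 - c y^2)} = \frac{c y^2 e^{-y}}{1 - e^{-y} + c y^2 e^{-y}}.
\end{equation}
The key observation I would exploit is that $\tilde p_{14}$ is a monotonically increasing function of the quantity $g(y) := \frac{c y^2 e^{-y}}{1 - e^{-y}}$, since $\tilde p_{14} = \frac{g(y)}{1 + g(y)}$ after dividing numerator and denominator by $1 - e^{-y}$. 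Because $t \mapsto t/(1+t)$ is strictly increasing on $t \ge 0$, maximizing $\tilde p_{14}$ is equivalent to maximizing $g(y)$, and since $c > 0$ is a constant, equivalent to maximizing $h(y) := \frac{y^2 e^{-y}}{1 - e^{-y}} = \frac{y^2}{e^y - 1}$.

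Next I would maximize $h(y) = \frac{y^2}{e^y - 1}$ over $y > 0$ by setting $h'(y) = 0$. Computing the derivative, $h'(y) = \frac{2y(e^y - 1) - y^2 e^y}{(e^y-1)^2}$, so the critical-point condition is $2(e^y - 1) = y e^y$, i.e. $(2 - y) e^y = 2$, i.e. $(2-y) = 2 e^{-y}$. To solve this for $y$ in terms of the Lambert $W$-function, I would set $u := y - 2$, so the equation becomes $-u = 2 e^{-(u+2)} = 2 e^{-2} e^{-u}$, hence $-u e^{u} = 2 e^{-2}$, i.e. $(-u) e^{-u} \cdot e^{2u} $... more directly: from $-u = 2e^{-2} e^{-u}$ multiply both sides by $e^{u}$ to get $-u e^{u} = 2 e^{-2}$, so $u e^{u} = -2 e^{-2} = -2/e^2$, whence $u = W(-2/e^2)$ and $y_0 = u + 2 = W(-2/e^2) + 2$, as claimed. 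I would note $-2/e^2 \approx -0.2707$ lies in $(-1/e, 0)$ so $W$ is well-defined (principal branch gives $W(-2/e^2) \approx -0.4064$, yielding $y_0 \approx 1.5936$), then translate back via $K = y_0 r/\lambda$.

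The remaining step is to verify this critical point is actually a maximum rather than a minimum or inflection. I would argue $h(y) = y^2/(e^y - 1) \to 0$ as $y \to 0^+$ (numerator $\sim y^2$, denominator $\sim y$, so $h \sim y$) and $h(y) \to 0$ as $y \to \infty$ (exponential dominates), while $h(y) > 0$ for all $y > 0$; hence $h$ attains an interior maximum, and if the critical-point equation $(2-y)e^y = 2$ has a unique positive solution then that solution is the maximizer. Uniqueness on $(0,\infty)$ follows because $(2-y)e^y$ equals $2$ at $y = 0$, has derivative $(1-y)e^y$ which is positive on $(0,1)$ and negative on $(1,\infty)$, so $(2-y)e^y$ rises from $2$ to a peak at $y=1$ then decreases monotonically through $2$ exactly once (at $y_0 \in (1,2)$) before tending to $-\infty$; thus $y_0$ is the unique positive root, and since $h \to 0$ at both ends with a single interior critical point, $y_0$ is the global maximum. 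I expect the main obstacle to be purely bookkeeping: getting the Lambert-$W$ substitution algebra right (in particular keeping track of signs so that the argument $-2/e^2$ lands on the correct branch) and cleanly justifying that the constant $c$ and the factor $1 - e^{-y} > 0$ never interfere with the monotone reduction — none of this is deep, but it is where sign errors would creep in.
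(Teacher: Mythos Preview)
Your proof is correct and follows the same overall strategy as the paper---substitute $y = \lambda K/r$, locate the critical point of the resulting one-variable function, and express it via the Lambert $W$-function. Your execution is cleaner in two respects. First, by writing $\tilde p_{14} = g(y)/(1+g(y))$ and reducing to the maximization of $h(y) = y^2/(e^y-1)$, you make the independence of the maximizer from the constant $c$ (the paper's $x = 4mr^2/(\lambda^2 D)$) transparent \emph{before} differentiating; the paper instead differentiates $\tilde p_{14}$ directly and only notices afterward that $x$ cancels from the critical-point equation. Second, you derive the Lambert-$W$ representation of the root by hand and give a self-contained argument (boundary behavior of $h$ plus uniqueness of the positive solution of $(2-y)e^y = 2$) that $y_0$ is the global maximum; the paper delegates the root-finding to computer algebra and appeals to a plot to rule out a minimum. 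Both routes arrive at the same transcendental equation $(2-y)e^y = 2$ and the same $y_0$, so the difference is one of presentation and rigor rather than substance.
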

\begin{proof}
We begin with the variable substitutions 
\begin{align}
    x &:= \frac{4mr^2}{\lambda^2 D}\\
    y &:= \lambda K/r. \label{eq:y}
\end{align}
These substitutions yield
\begin{equation}
    \tilde p_{14} =\frac{xy^2e^{-y}}{1 - e^{-y}(1 - xy^2)}. 
\end{equation}

The variable $x$ is a measure of the quantum computers power in relation to problem difficulty. As $m,r,\lambda, D >0$, the value of $x$ is always positive. The variable $y$ determines the time at which the quantum miner should measure, as $T = y/\lambda$. 
Simplifying further,
\begin{align}
\tilde p_{14} &= 
  \frac{xy^2e^{-y}}{1 - e^{-y}(1 - xy^2)}  \\&=\frac{xy^2}{e^y + xy^2 - 1}.
\end{align}
Figure \ref{fig:plot} (from Wolfram) shows the plot of $\tilde p_{14}$ if $x$ is in the range $[0,1]$ and $y$ is the range $[1,10]$.

\begin{figure}
    \centering
    \includegraphics[width=0.4\textwidth]{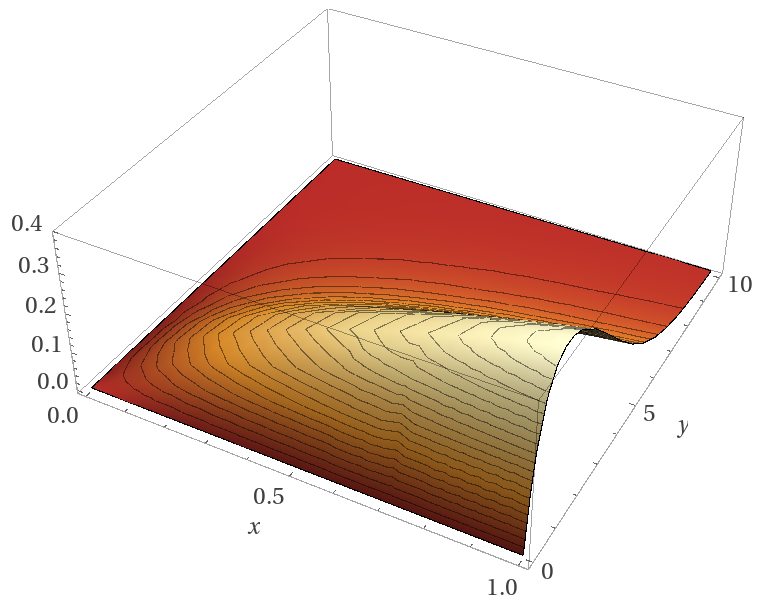}
    \caption{Probability Plot}
    \label{fig:plot}
\end{figure}

To find the maximum, we take the derivative with respect to $y$ and set this derivative equal to zero:
\begin{align} \
    0 &= \frac{\partial}{\partial y} \frac{xy^2}{e^y + xy^2 - 1} 
    \\ &= \frac{-(x (2 + e^y (-2 + y)) y)}{(-1 + e^y + x y^2)^2}. \label{eq:for alpha}
\end{align}
 Solving Eq.\ \ref{eq:for alpha} with computer algebra software  (Wolfram Alpha \cite{WolframAlpha}), we find only one real root at $y = y_0$ where $y_0$ is given in the theorem statement.  From Figure \ref{fig:plot}, we see that $y_0$ corresponds to a maximum of $\tilde p_{14}$, not a minimum. Interestingly, the maximum at $y = y_0$  is independent of $x$, and therefore the optimal mining procedure is independent of both problem difficulty and the computational power of the miner.
 
Solving Eq.\ \ref{eq:y} for $K$ gives that the maximum occurs at
\begin{align}
 K = \frac{y_0 r}{\lambda}.
\end{align}
\end{proof}
As $r$ is the quantum computer's clock rate in Grover iterations per second, and $\lambda^{-1} \approx \lambda_0^{-1} = 10$ minutes in the small computational power regime, this root implies that the quantum miner should apply Grover iterations for ${y_0 \cdot 10 \approx 16}$ minutes before measuring. One might wonder what the probability is of classical miners failing to find a block for the $16$ minutes that the quantum miner takes to reach the measurement step. This probability is given by
\begin{align}
    1 - \int_0^{y_0/\lambda_0}\lambda_0 e^{-\lambda_0t}dt &= e^{-y_0}\\
    &\approx .203.
\end{align}
In other words, if the quantum miner plans to measure at $16$ minutes, then there is approximately a 20\% chance a classical miner does not find a block before they make this measurement.

\subsection{Performance Measures}

\subsubsection{Effective Hash Rate}
The probability the quantum miner succeeds $P$ easily translates into an effective hash rate. We define the quantum miner's effective hash rate to be the hash rate of a classical computer which successfully mines the same proportion of blocks.  As $D$ is the average number of classical hashes required to mine one block and $\lambda_0 := 1/10$ minutes is the average time for a block to be mined, the total hash rate of the network is given by $D\lambda_0$. The effective hash rate of the quantum miner is then simply
\begin{equation}\label{eq:eff hash}
   PD\lambda_0
\end{equation}
as $P$ is the expected fraction of blocks the quantum miner adds to the blockchain. 
Note that this definition for effective hash rate is different from the one in \cite{Aggarwal}.

\subsubsection{Efficiency}

Now we analyze the efficiency advantage a quantum computer provides in mining Bitcoin. 
On average the quantum miner applies $mr/\lambda_0$ Grover iterations for each block added to the blockchain (by any miner) since the quantum miner is continually applying $m$ Grover iterations in parallel at a rate $r$ and each block takes $\lambda_0$ average time to be mined. The quantum miner adds $P$ portion of these blocks. Thus, the expected number of Grover iterations applied for each block the \textit{quantum miner} adds to the blockchain is
\begin{equation} \label{eq:Qcost}
    \frac{mr}{\lambda_0 P}. 
\end{equation}
 Let $Q_\$$ be the cost, in dollars, to implement one Grover iteration. This cost, like the cost of classical mining, is primarily derived from energy expenditures. The expected cost per block for the quantum miner is then
\begin{equation} \label{eq:Q efficiency}
    \frac{Q_\$mr}{\lambda_0 P}.
\end{equation}

Now compare this mining efficiency with that of a classical miner. Let $C_\$$ be the cost in dollars to implement one hash on a classical computer. The probability that a single hash yields a marked header is $1/D$, so we expect the classical miner to use $D$ hashes per block they mine. Therefore, the expected efficiency for the classical miner is 
\begin{equation}
   C_\$D
\end{equation}
and the condition for the quantum miner to be more efficient at mining than a classical miner is then
\begin{equation}\label{eq:1conditions}
    \frac{Q_\$mr}{\lambda_0 P} < C_\$D.
\end{equation}

Now, we once again consider the peaceful mining case where $\phi = 0$ and $P = p_{14}$.
In order to understand the scaling in this case more clearly, we approximate $p_{14}$ with
\begin{equation}
    \tilde p_{14} = \frac{xy^2}{e^y + xy^2 - 1}.
\end{equation}

If the quantum miner uses the optimal procedure, that is they measure after approximately 16 minutes, then $y = y_0$. Using $y = y_0$ in the expression for $\tilde p_{14}$ and reducing yields, 
\begin{align}
    \tilde p_{14} = \frac{x}{a + x}
\end{align}
where 
\begin{align}
    a :&= \frac{e^{y_0} - 1}{y_0^2}\\
     &\approx 1.544. 
\end{align}

We now extend our $K \ll \sqrt{D}$ assumption to  $mK \ll \sqrt D$. As, for the optimal quantum miner, $mK = my_0r/\lambda$, this extended assumption implies $mr/\lambda \ll \sqrt D$.  Taking the first term from the Taylor series expansion of $x/(a+x)$ about $x=0$ gives
\begin{align}\label{eq:x/a}
    \tilde p_{14} \approx x/a
     = \frac{4mr^2}{a\lambda^2 D},
\end{align}
and approximation that becomes accurate in the $mr/\lambda \ll \sqrt D$ limit. 
Under this same approximation, Eq.\ \ref{eq:1conditions} with $P \approx \tilde p_{14}$ and $\lambda$ approximated by $\lambda_0$ becomes 
\begin{equation} \label{eq:effecient}
  \frac{a\lambda_0 Q_\$}{4r} <  C_\$ 
 \end{equation}
Plugging in the value of $a$, this expression becomes
\begin{equation}
    Q_\$  < C_\$\frac{r}{\lambda_0}2.59\dots
\end{equation}
where the units of $r$ are Grover iterations per second.

The expectation is that as quantum computers improve the cost $Q_\$$ of a Grover iteration decreases, and the  rate of Grover iterations $r$ increases over time, potentially making quantum mining viable in future. By projecting values of $r, Q_\$,$ and $C_\$$ into the future, estimates can be made for when quantum mining will become advantegous. While we do not compute these projections ourselves, in the next section we evaluate the energy efficiency a quantum computer would require for advantageous mining using the current cost of classical hashes $C_\$$ and optimistic assumptions for $r$.

\subsection{Example Application}

In this subsection we demonstrate an application of our results by calculating numerical estimates for a quantum miner's performance. These calculation demonstrate how to use our results to evaluate the feasibility of a given quantum computer for Bitcoin mining. We give estimates for both  effective hash rate and energy efficiency required for advantegous mining. 

The quantum computer we consider is described in \cite{Aggarwal}. First, the computer has a gate speed of $66.7$ MHz, which is the speed achievable on current devices. Aggarwal et al. also show that a single Grover iteration (for the Bitcoin search problem) would take a circuit of depth $297784$ to perform if we assume no overhead from error correction.  We make this assumption for simplicity as the error correction overhead has a non-trivial relationship with the number of sequential Grover iterations used. For this quantum computer, 
\begin{align}
    r &= \frac{ 66.7 \times 10^{6}}{297784} \\ 
    &= 224 \text{ Grover iterations per second}.
\end{align}

Recall, the effective hash rate of the quantum miner is
 $ P\lambda D$ (Eq.\ \ref{eq:eff hash}). In the peaceful mining case, $P = p_{14}$ which can be approximated by $\tilde p_{14} = x/(a+x)$.
Assuming $m = 1$ (no parallelism), and near the current difficulty at $D = 10^{20}$ \cite{difficulty}, we find 
\begin{align}
    x &= \frac{4(224.0 )^2\cdot (10 \cdot 60)^2  }{ 10^{20}}\\
    &= 7.2 \times 10^{-10}
\end{align}
and 
\begin{align}
    \tilde p_{14} &= x/(a+x) \\ 
    &= 4.7\times 10^{-10}.
\end{align}
This means the quantum computer would comprise only a small fraction  of the  mining power of the Bitcoin network. 
Finally, we calculate the effective hash rate to be
\begin{equation}
    \tilde p_{14}\lambda D \approx 78\text{ MH/s} 
\end{equation}
Compared with the total network hash rate of $D\lambda = 1.7 \times 10^{11}$ MH/s, the quantum miner's effective hash rate of $78$ MH/s is minuscule. 

\paragraph{Efficiency Requirement}
Next we turn to the efficiency the quantum computer would need to outperform a classical computer at Bitcoin mining. Recall the condition for this outperformance is given by Eq.\ \ref{eq:1conditions}. Plugging into this equation we find the condition
\begin{equation} \label{eq:condition}
    Q_{\$} <  C_\$3.49 \times 10^5
\end{equation}
for advantageous quantum mining. If we instead use Eq.\ \ref{eq:effecient} which employs an additional approximation, then we get same result, up to three significant figures. 

Eq.\ \ref{eq:condition} states that for a quantum miner to be more efficient than a classical miner,  the energy cost of a Grover iteration must be no more than $3.49 \times 10^5$ times more expensive than the cost of a classical hash. Current specialized classical mining equipment has an energy efficiency on the order of $10^{-10}$ Jules per hash \cite{miners}. Thus, a quantum computer would need an efficiency better than $3.49 \times 10^5 \times 10^{-10} \approx 10$ $\mu$J per Grover iteration to be advantageous over the current mining efficiency of classical miners.  
\section{Discussion}

We analyze the feasibility of quantum Bitcoin mining for the scenario in which a single quantum miner competes in an otherwise classical network. We give a closed-form expression for the quantum miner's success when the quantum miner has small  computational power compared to the network. We then describe an optimal mining protocol for when the quantum miner is also peaceful. We analyze the quantum miner's efficiency and effective hash rate under these assumptions. Lastly, we give conditions for advantageous quantum mining which can be used to evaluate if a given quantum computer can provide an advantage over classical computers at mining. 

Although the average time for blocks to be mined by the network is $\lambda^{-1}= 10$ minutes, we find that, surprisingly, a quantum miner should wait until $y_0/\lambda_0\approx 16$ minutes to measure. The reason they should measure at this time is that the low probability $(20\%)$ of reaching the measurement step is counteracted by an increased probability that measurement yields a block. Such an effect is directly caused by the superlinear scaling of the quantum miner's success probability with respect to the number of Grover iterations applied. 

The computational problem of quantum Bitcoin mining contains an embedded time limit in that the quantum miner race to find a block before any other miner. This limit yields slightly non-intuitive dependencies on problem parameters. These dependencies are clear in the small computational power regime. Most noticeably, the probability of success for a peaceful quantum miner scales approximately linearly with problem difficulty $D$. This is the same scaling as classical mining and so runs counter to the usual intuition that quantum advantage increases with harder and harder problems. The reason quantum advantage does not increase with difficulty  is  the time limitation on mining which forces the quantum miner to apply only a fraction of the complete algorithm for Grover's search. As problem difficulty increases, this fraction becomes smaller as the total number of Grover iterations applied for each block stays constant, but the number of iterations to required to complete Grover's algorithm increases.

Instead, quantum advantage grows as the speed of the quantum computer $r$ increases because as $r$ becomes larger, the quantum miner is able to apply a larger fraction of Grover's algorithm. This scaling is due to the classical mining success probability having only a linear dependence on gate speed, whereas quantum success probability has an approximately quadratic dependence on gate speed. Finally, we note that the quantum advantage also increases as $\lambda^{-1}$ increases for similar reasons as discussed for the $r$ scaling; more time between blocks gives the quantum miner more time to apply Grover iterations. This scaling suggests that quantum mining is more effective if the blockchain protocol has longer times $\lambda^{-1}$ between blocks.

We now compare our effective hash rate calculation to that in \cite{Aggarwal}.  Aggarwal et al. aim to evaluate when and if a quantum miner could dominate the Bitcoin networks hashing power. For this reason, they assume that the quantum miner applies the entirety of Grover's algorithm, as is needed for the quantum miner to succeed with probability close to 1. They calculate that, with $r=66.7 $MHz , $D \approx .42 \times 10^{20} $, and a factor of $538.6$ increase in the number of gates per Grover iteration due to error correction (total gates per Grover iteration is $538.6 \cdot 297784$), a quantum miner could achieve an effective hash rate of $13\text{GH/s}$. This effective hash rate is much larger than the effective hash rate of $78 \text{MH/s}$ we calculate for a computer with no error correction overhead that is otherwise identical at the same difficulty $D$.

The discrepancy comes from two places. First, we are interested in the regime in which the quantum miner has small computational power compared to the network. The $r=66.7 $MHz quantum computer satisfies this regime well. Under these assumptions the entirety of Grover's algorithm can not be completed. Therefore less quadratic speed-up can be reaped from the quadratic scaling difference, and the quantum miner performs worse than they would if we assumed Grover's algorithm was completed (as is assumed in \cite{Aggarwal}). Second, the effective hash rate calculation by Aggarwal et al. does not account for Grover iterations that are wasted because a classical miner finds a block before measurement is reached.  Again, this approximation fits the regime they consider, as a quantum miner who is always able to complete Grover's algorithm does not waste Grover iterations. 

The fundamentals of Grover search are well understood. However,
interesting properties of quantum search arise in the context of specific applications. We illustrate these properties in the case of a single quantum computer Bitcoin mining in an otherwise classical network. In order to derive an optimal quantum mining procedure we assume that the quantum computer can not dominate the network and  that no aggressive mining occurs. For future work, it would be interesting to see if an optimal quantum mining protocol can be determined when these assumptions are relaxed. It would also be valuable to numerically verify that our approximation $\tilde P$ for the quantum miner's success probability $P$ agrees in the small computational power regime.  

\section*{Acknowledgements} The authors thank Shahadat Hossain and Robert Benkoczi for helpful discussions throughout the development of this work and thank Barry Sanders for insightful comments on the manuscript. This work was supported by the Alberta Government We acknowledge the traditional owners of the land on which this work was undertaken at the University of Calgary and the University of Lethbridge: the Treaty 7 First Nations \href{www.treaty7.org}{www.treaty7.org}.

\bibliography{references}{}
\bibliographystyle{abbrv}

\end{document}